\documentclass[10pt,conference]{IEEEtran}

\usepackage[utf8]{inputenc}
\usepackage{amsmath,amsfonts,amssymb}
\usepackage{graphicx}
\usepackage{float}
\usepackage{booktabs}
\usepackage{multirow}
\usepackage{array}
\usepackage{xcolor}
\usepackage{subcaption}
\usepackage{amsthm} 
\usepackage{algorithm}
\usepackage{mathtools}
\usepackage[noend]{algpseudocode}  

\usepackage{cite}
\usepackage{hyperref}

\usepackage{braket}
\usepackage{mathrsfs}

\newtheorem{lemma}{Lemma}

\title{Efficient and Practical Black-Box Verification of Quantum Metric Learning Algorithms}

\author{

    \IEEEauthorblockN{Ahmed Shokry}
    \IEEEauthorblockA{
        Computer Science and Engineering\\
        The Pennsylvania State University\\
        University Park, USA\\
        ams12545@psu.edu
    }
    \and
    \IEEEauthorblockN{Movahhed Sadeghi}
    \IEEEauthorblockA{
        Computer Science and Engineering\\
        The Pennsylvania State University\\
        University Park, USA\\
        mus883@psu.edu
    }
    \and
    \IEEEauthorblockN{Mahmut Kandemir}
    \IEEEauthorblockA{
        Computer Science and Engineering\\
        The Pennsylvania State University\\
        University Park, USA\\
        mtk2@psu.edu
    }

}

\begin{document}

\maketitle

\begin{abstract}
\boldmath
Quantum metric learning enhances machine learning by mapping classical data to a quantum Hilbert space with maximal separation between classes. This separation is important for the performance of downstream quantum tasks. However, on current Noisy Intermediate-Scale Quantum (NISQ) hardware, this mapping (embedding) process itself is prone to errors and, in untrusted settings, could be fundamentally incorrect. Therefore, verifying that a quantum embedding model successfully achieves its promised separation is essential to ensure the correctness and reliability of the entire quantum machine learning pipeline.

In this paper, we propose a practical black-box verification protocol to audit the performance of quantum metric learning models. We define a setting with two parties: a powerful but untrusted \textit{prover}, who claims to have a parameterized unitary circuit that embeds classical data from different groups with a guaranteed angular separation, and a limited \textit{verifier}, whose quantum capabilities are restricted to performing only basic measurements. The verifier has no knowledge of the implementation of the prover, including the structure of the model, its parameters, or the details of the prover measurement setup. To verify the separation between different data groups, the proposed algorithm must overcome two key challenges. First, the verifier is ignorant of the prover's implementation details, such as the optimization cost function and measurement setup. Consequently, the verifier lacks any prior information about the expected quantum embedding states for each group. Second, the destructive nature of quantum measurements prevents direct estimation of the separation angles. Our algorithm successfully overcomes these challenges, enabling the verifier to accurately estimate the true separation angles between the different groups.

We implemented the proposed protocol and deployed it to verify the QAOAEmbedding models. The results from both theoretical analysis as well as practical implementation show that our proposal effectively assesses embedding quality and remains robust in adversarial settings. Our work thus provides a critical tool for validating the correctness and quality of quantum embeddings.
\end{abstract}

\begin{IEEEkeywords}
Verifying Quantum Computing, Verification Protocols, Quantum Metric Learning
\end{IEEEkeywords}

\section{Introduction}
Quantum Machine Learning (QML) has emerged as a promising application for near-term quantum computers, offering potential advantages in tasks such as classification and generative modeling~\cite{wittek2014quantum, biamonte2017quantum, schuld2018supervised}. Among various QML approaches, those based on variational quantum circuits are widely studied due to their adaptability to noisy intermediate-scale quantum (NISQ) hardware \cite{Havlicek2019, Benedetti2019b}. A typical variational QML classifier comprises two key components: a quantum feature map that encodes classical data into a high-dimensional Hilbert space, and a variational ansatz that processes these states for classification~\cite{havlivcek2019supervised}.

Traditional QML often focuses on optimizing the measurement and classification circuit while treating data embedding as a fixed, non-trainable component~\cite{Schuld2019}. In contrast, a more recent and powerful paradigm, known as \textit{Quantum Distance Metric Learning}, optimizes the quantum feature map itself to maximize the separation between data classes within the Hilbert space~\cite{lloyd2020quantum,benedetti2019parameterized,schuld2020circuit,farhi2018classification}. This approach is grounded in the principle that creating well-separated quantum representations of data allows the use of simpler and more robust classification schemes. The choice of distance metric directly informs the optimal classifier: for instance, maximal separation under the trace distance enables the application of the optimal Helstrom measurement for distinguishing quantum states~\cite{Helstrom1976}, while separation defined by the Hilbert-Schmidt distance naturally leads to classifiers based on quantum fidelity~\cite{Blank2019}.

The importance of quantum metric learning extends beyond theoretical interest. By shifting the computational burden from measurement optimization to embedding optimization, this approach reduces the quantum circuit depth required for effective classification. This reduction is a critical advantage for NISQ-era devices, which are constrained by limited coherence times~\cite{Killoran2019}. Furthermore, the ability to learn high-quality embeddings is a key pathway towards potential quantum advantages in machine learning (ML)~\cite{Harrow2017}. This potential is particularly relevant for kernel methods, where quantum computers can efficiently evaluate complex, high-dimensional kernels utilizing the natural geometry of Hilbert space, without the need to explicitly construct the feature map~\cite{lloyd2014quantum,lloyd2016quantum,rebentrost2014quantum}.

However, the quality of the learned embedding must be verifiable to ensure reliable performance. Although recent work has demonstrated the feasibility of training quantum embeddings using hybrid quantum-classical optimization \cite{Bergholm2018}, the question of how to \textit{verify} that the claimed quantum advantage in metric learning remains open. This verification is particularly important given the probabilistic nature of quantum measurements and the presence of noise in current quantum hardware.

In this paper, we introduce a practical black-box verification algorithm designed to audit the performance of quantum metric learning models. We formalize this within a two-party framework: a computationally powerful but untrusted prover $\mathcal{P}$ , who claims to have a parameterized quantum circuit that can generate embeddings with a guaranteed angular separation between data classes, and a limited verifier $\mathcal{V}$. The quantum capabilities of the verifier are restricted to performing basic measurements and do not have internal knowledge of the implementation of the prover, such as the model architecture, its trained parameters, or the specifics of the measurement setup used. To this end, verification must address two fundamental challenges inherent in this setting. First, the verifier ignorance of the prover's model details prevents any prior knowledge of the expected quantum embedding states. Second, the destructive nature of quantum measurements obstructs the direct estimation of the separation angles. Our proposed algorithm overcomes these challenges, enabling the verifier to accurately estimate the true separation angles, providing a robust technique to validate the prover claims without knowing the details of the internal circuit model.

We implemented and deployed the proposed verification algorithm to evaluate the PennyLane QAOAEmbedding variational quantum model~\cite{bergholm2018pennylane, farhi2014quantum}. 
The results from our theoretical analysis as well as practical implementation show that the proposed protocol can effectively evaluate quantum metric learning models and performs well in adversarial scenarios, demonstrating its utility as a verification algorithm for quantum embedding quality.

The remainder of this paper is organized as follows. Section~\ref{background} reviews the fundamentals of quantum metric learning and statistical interactive proofs. Section~\ref{methodology} presents our protocol for verifiable quantum metric learning and embeddings. 
Section~\ref{experiment} gives our experimental results. Section~\ref{discussion} discusses the proposed algorithm and how to extend it, and finally Section~\ref{conclusion} concludes the paper.


\section{Background}
\label{background}
\subsection{Quantum Metric Learning}
This section gives a background on quantum metric learning for QML by optimizing the quantum feature map that embeds classical data into a high-dimensional Hilbert space. The key idea is to maximize the separation between the data from the different sets in the Hilbert space, which simplifies the subsequent machine learning tasks. 

The process begins by uniformly sampling data points from two classes, denoted as class $A$ and class $B$. For class $A$, we sample $M_a$ data points $\{a_1, \ldots, a_{M_a}\}$, and for class $B$, we sample $M_b$ data points $\{b_1, \ldots, b_{M_b}\}$. These data points are classical inputs that need to be embedded into quantum states.

Each classical data point $x$ (where $x$ can be $a_i$ or $b_j$) is mapped to a quantum state $|x\rangle$ using a parametrized quantum circuit $U(x, \theta)$. This circuit is a unitary transformation that depends on both the input data $x$ and a set of trainable parameters $\theta$. The embedding is performed by applying $U(x, \theta)$ to an initial state $|0\ldots 0\rangle$, resulting in the quantum state $|x\rangle = U(x, \theta) |0\ldots 0\rangle$. 

The embedded states from each class are used to construct two density matrices (ensembles) representing the classes in the Hilbert space. For class $A$, the ensemble is given by:
    \[
    \rho = \frac{1}{M_a} \sum_{a \in A} |a\rangle \langle a|,
    \]
    and for class $B$, the ensemble is:
    \[
    \sigma = \frac{1}{M_b} \sum_{b \in B} |b\rangle \langle b|.
    \]
These density matrices capture the statistical properties of the embedded data points for each class.

The separation between the two ensembles $\rho$ and $\sigma$ is quantified using a distance metric in Hilbert space. The trace distance ($\ell_1$)
\[D_{\text{tr}}(\rho, \sigma) = \frac{1}{2} \text{tr}(|\rho - \sigma|)\]
is the quantum analog of the $\ell_1$ distance. Maximizing this distance ensures optimal separation for the Helstrom classifier~\cite{helstrom1969quantum}.
The Hilbert-Schmidt Distance ($\ell_2$)~\cite{coles2019strong}
\[D_{\text{hs}}(\rho, \sigma) = \sqrt{\text{tr}((\rho - \sigma)^2)}\]
corresponds to the $\ell_2$ distance. Maximizing this distance is well-suited for use in the fidelity classifier. 

The distance between $\rho$ and $\sigma$ can be estimated using quantum circuits. For the Hilbert-Schmidt distance, the SWAP test or inversion test circuits are employed. The SWAP test measures the overlap between two quantum states $|\psi\rangle$ and $|\phi\rangle$ by estimating $|\langle \psi | \phi \rangle|^2$. By applying the SWAP test to pairs of states from $\rho$ and $\sigma$, the terms $\text{tr}(\rho \sigma)$, $\text{tr}(\rho^2)$, and $\text{tr}(\sigma^2)$ can be computed, which are needed to evaluate $D_{\text{hs}}(\rho, \sigma)$.
\begin{equation}
    D_{\mathrm{hs}}(\rho,\sigma)=\sqrt{\operatorname{tr}(\rho^{2})+\operatorname{tr}(\sigma^{2})-2\,\operatorname{tr}(\rho\sigma)}.
\end{equation}
The parameters $\theta$ of the embedding circuit $U(x, \theta)$ are updated to maximize the chosen distance metric. This is done using classical optimization techniques such as gradient descent~\cite{ruder2016overview}. The cost function $C$ to be minimized is often a function of the distance between $\rho$ and $\sigma$, 
where minimizing $C$ corresponds to maximizing the distance. The optimization process iteratively adjusts $\theta$ to reduce overlap between the ensembles, thus increasing their separation.

The procedure repeats until the average angle between the states of the two classes is as large as possible, ideally $\pi/2$, while the angle between the states of the same class is as small as possible, ideally 0. At this point, the ensembles $\rho$ and $\sigma$ are maximally separated in Hilbert space, and the optimal measurement for classification (e.g., Helstrom or fidelity measurement) can be applied with minimal error. Figure~\ref{fig:qml} demonstrates the separation of data class embeddings achieved through quantum metric learning.

\begin{figure}
    \centering
    \includegraphics[width=0.8\linewidth]{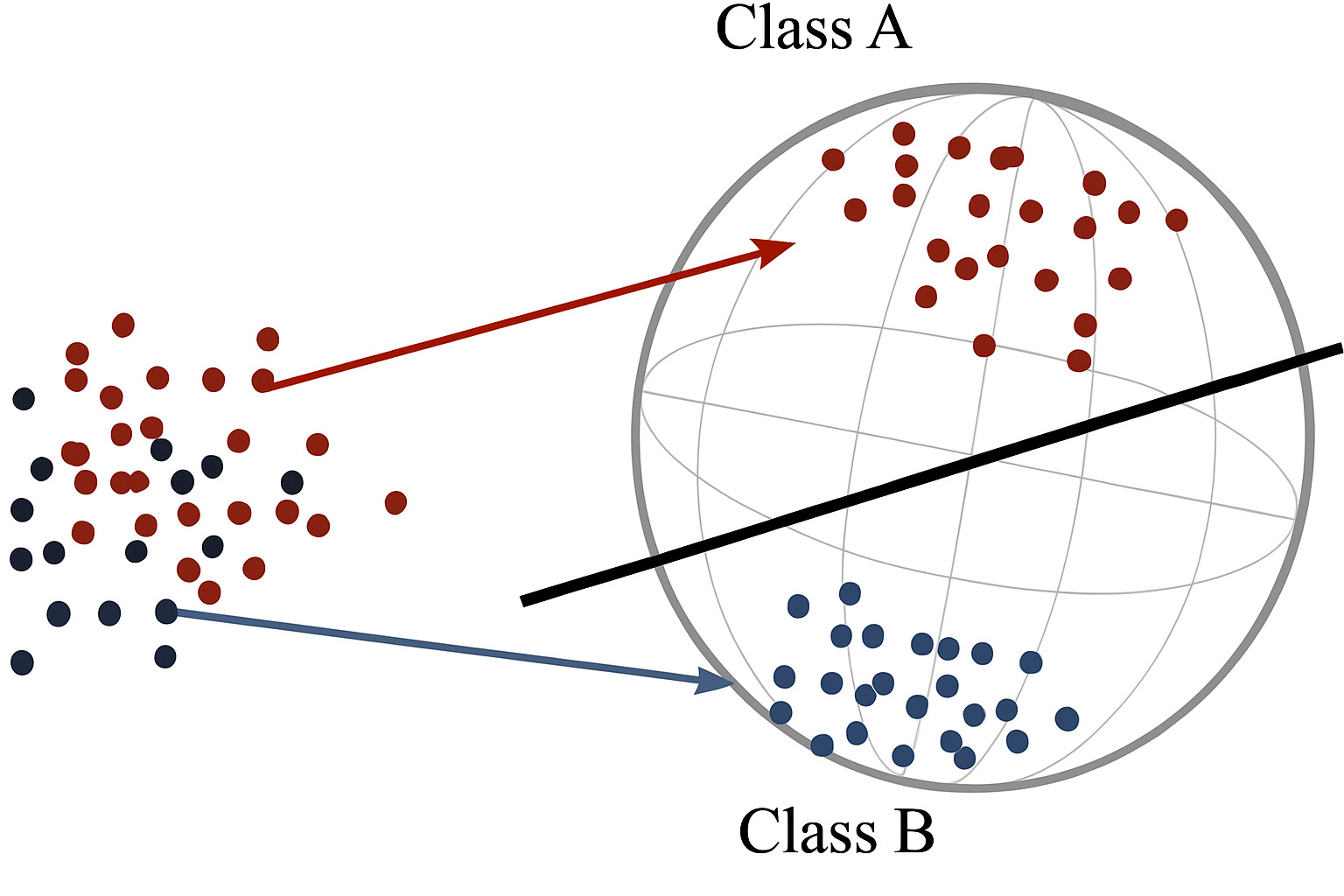}
    \caption{Example of quantum metric learning. \label{fig:qml}}
    \vspace{-0.1in}
\end{figure}


\subsection{Statistical Interactive Proofs}
A statistical interactive proof involves a computationally unbounded \emph{prover} claiming a statement is true, and a computationally bounded \emph{verifier} who interacts with the prover to verify this claim. The protocol must satisfy:
\textbf{(1) Completeness}: If the claim is true, an honest prover convinces the verifier to accept with high probability. \textbf{(2) Soundness}: If the claim is false, no cheating prover can make the verifier accept except with negligible probability. These properties ensure that honest proofs are accepted while dishonest proofs are rejected with high confidence~\cite{goldwasser1989knowledge}.

\section{Verification Procedure}
\label{methodology}
We begin by formally describing the proposed verification protocol and outlining the key challenges. We then present the protocol’s detailed construction, followed by an analysis of its completeness and soundness.
\subsection{Formal Description and Challenges}
Consider an oracle $\mathcal{O}$ that provides an infinite stream of data samples. Assume, without loss of generality, that each sample is a two-dimensional vector $\vec{x}= (x_1, x_2)$ with a binary label $y \in \{0,1\}$, such that the complete data specification is $(\vec{x}, y)$. A prover $\mathcal{P}$ claims to possess a parameterized unitary circuit $U(\vec{x},\theta)$ that acts as an optimal quantum embedding. Specifically, when applied to data from this oracle, $U(\vec{x},\theta)$ is claimed to produce quantum state embeddings that separate the two classes with maximal angular separation in Hilbert space. The verification framework involves two distinct parties with asymmetric knowledge and capabilities:
\begin{itemize}
    \item \textbf{Verifier ($\mathcal{V}$)}: Has exclusive access to the data oracle $\mathcal{O}$ but possesses \emph{no} knowledge of the prover's unitary $U(\vec{x},\theta)$ including its architecture, the specific parameter values $\theta$, or the measurement device used. The verifier's quantum capabilities are restricted to performing only basic quantum measurements.
    \item \textbf{Prover ($\mathcal{P}$)}: Possesses the full implementation of $U(\vec{x},\theta)$ but has \emph{no} direct access to the verifier's data oracle $\mathcal{O}$.
\end{itemize}

The central task for the verifier is to determine, using only its limited quantum resources and data access, whether the prover's embedding $U(\vec{x},\theta)$ indeed maintains the promised maximal angular separation between quantum states representing different classes, despite having zero internal knowledge of the prover's implementation.

We formalize the problem as a verification system $\mathcal{Q}=(\mathcal{P},\mathcal{V})$. Consider a dataset $\mathcal{D}$ composed of two disjoint classes, $\mathcal{D}_0$ and $\mathcal{D}_1$, such that $\mathcal{D} = \mathcal{D}_0 \cup \mathcal{D}_1$. Let $\Psi$ and $\Phi$ be index sets for the data samples in $\mathcal{D}_0$ and $\mathcal{D}_1$, respectively. Thus, the classical data vectors are $\{\vec{x}_i \in \mathcal{D}_0\}_{i \in \Psi}$ with the label $y_i=0$ and $\{\vec{x}_j \in \mathcal{D}_1\}_{j \in \Phi}$ with the label $y_j=1$. The prover $\mathcal{P}$ possesses a parameterized unitary operation $U(\vec{x}, \theta)$ that encodes a classical input vector $\vec{x}$ into a quantum state. The prover uses this circuit to prepare the following quantum embeddings:
\begin{itemize}
    \item For class 0: $\{\ket{\psi_i} = U(\vec{x}_i, \theta)\ket{0} \mid i \in \Phi\}$
    \item For class 1: $\{\ket{\phi_j} = U(\vec{x}_j, \theta)\ket{0} \mid j \in \Psi\}$
\end{itemize}


 
The prover claims that their embedding model achieves the optimal separation defined in the Lemma~\ref{lem:orth}. This optimality is concretely promised as orthogonality between classes: any state from class 0 is orthogonal to any state from class 1 ($\langle\psi_i|\phi_j\rangle=0$), ensuring a maximal angular separation of $\pi/2$ in the Hilbert space.

\begin{lemma}[Optimal Quantum Metric Learning]\label{lem:orth}
Given two sets of quantum states $A = \{\ket{a_i}\}_{i=1}^{M_a}$ and $B = \{\ket{b_j}\}_{j=1}^{M_b}$ prepared by a parameterized quantum circuit $U(\vec{x}, \theta)\ket{0}$, under the optimization procedure that:
\begin{enumerate}
    \item Maximizes the average fidelity between intra-set states: $\max_{\theta} \frac{1}{|A|^2} \sum_{i,j} |\langle a_i|a_j\rangle|^2$ and similarly for set $B$.
    \item Minimizes the average fidelity between inter-set states: $\min_{\theta} \frac{1}{|A||B|} \sum_{i,j} |\langle a_i|b_j\rangle|^2$.
\end{enumerate}
The optimal solution achieves the following:
\begin{enumerate}
    \item Intra-set angles $\Theta_{AA}, \Theta_{BB} \leq \epsilon$ (for small $\epsilon$, indicating that states within a class are nearly identical).
    \item Inter-set angles $\Theta_{AB} \rightarrow \pi/2$ (indicating that states from different classes are maximally separated).
\end{enumerate}
\end{lemma}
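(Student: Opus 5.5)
The plan is to reformulate the two optimization objectives in terms of the ensembles $\rho=\frac{1}{M_a}\sum_i\ket{a_i}\bra{a_i}$ and $\sigma=\frac{1}{M_b}\sum_j\ket{b_j}\bra{b_j}$. With $\Theta_{xy}=\arccos|\langle x|y\rangle|$ denoting the angle between two pure states, the optimal configuration can then be read off from elementary bounds on purities and overlaps.

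\textbf{Rewriting the objectives.} First I would note the identities
\[
\frac{1}{M_a^2}\sum_{i,j}|\langle a_i|a_j\rangle|^2=\operatorname{tr}(\rho^2),\qquad \frac{1}{M_aM_b}\sum_{i,j}|\langle a_i|b_j\rangle|^2=\operatorname{tr}(\rho\sigma).
\]
These hold by expanding the products of rank-one projectors. Consequently the combined cost $C=\operatorname{tr}(\rho\sigma)-\tfrac12\operatorname{tr}(\rho^2)-\tfrac12\operatorname{tr}(\sigma^2)$ equals $-\tfrac12 D_{\mathrm{hs}}(\rho,\sigma)^2$, using the Hilbert--Schmidt expansion given earlier. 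Minimizing $C$ is therefore the same as maximizing $D_{\mathrm{hs}}$.

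\textbf{Global bounds and their saturation.} Next I would use $\operatorname{tr}(\rho^2)\le 1$, $\operatorname{tr}(\sigma^2)\le1$ and $\operatorname{tr}(\rho\sigma)\ge 0$, the last because both operators are positive semidefinite. These give $D_{\mathrm{hs}}^2\le 2$. Equality requires all three bounds to be tight simultaneously:
\begin{itemize}
\item $\operatorname{tr}(\rho^2)=1$ forces $\rho$ to be pure. Since the purity is an average of the nonnegative terms $|\langle a_i|a_j\rangle|^2\le1$, every term must equal $1$, so all $\ket{a_i}$ coincide up to phase. Hence $\Theta_{AA}=0$, and likewise $\Theta_{BB}=0$.
\item $\operatorname{tr}(\rho\sigma)=0$ forces each nonnegative summand $|\langle a_i|b_j\rangle|^2$ to vanish. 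Hence every inter-set angle equals $\pi/2$.
\end{itemize}
The maximizer is attainable in principle (for instance $\ket{a}=\ket{0\cdots0}$ and $\ket{b}=\ket{1\cdots1}$), provided the circuit family is expressive enough. This expressivity assumption has to be stated explicitly as an implicit hypothesis of the lemma.

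\textbf{Robustness to approximate optimality.} To get the stated $\epsilon$ and limit forms rather than exact equalities, I would prove a stability version. Suppose $C\le -1+\delta$. Then each of $1-\operatorname{tr}(\rho^2)$, $1-\operatorname{tr}(\sigma^2)$ and $\operatorname{tr}(\rho\sigma)$ is at most $2\delta$. Because these quantities are averages of nonnegative deviations, an averaging/Markov argument bounds the individual deviations:
\begin{itemize}
\item every pairwise deficit satisfies $1-|\langle a_i|a_j\rangle|^2\le 2\delta M_a^2$;
\item every inter-set overlap satisfies $|\langle a_i|b_j\rangle|^2\le 2\delta M_aM_b$.
\end{itemize}
These translate into $\Theta_{AA}\le \arcsin\sqrt{2\delta M_a^2}=:\epsilon$ and $\Theta_{AB}\ge \arccos\sqrt{2\delta M_aM_b}\to\pi/2$ as $\delta\to0$.

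\textbf{Main obstacle.} The hardest point is the interpretation rather than the computation. The lemma speaks of "the optimal solution" of two separate optimizations, which could in principle conflict for a restricted ansatz. I would therefore state the result for the joint cost $C$ under the expressivity assumption, and read $\Theta$ as the worst-case pairwise angle, accepting the factor $M^2$ loss; the bounds would be weaker if $\Theta$ were instead taken as an average angle.
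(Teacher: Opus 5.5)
Your argument is correct, and its decisive step is the one the paper uses: an average of squared overlaps that are each at most $1$ (resp.\ at least $0$) reaches its extreme value only if every term does. This forces all intra-set overlaps to equal $1$ and all inter-set overlaps to equal $0$.

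The difference is in everything around that step. The paper treats the two objectives separately and tacitly assumes that a single $\theta$ reaches both extremes. It obtains the ``$\leq\epsilon$'' and ``$\to\pi/2$'' clauses only by appeal to finite training data and convergence thresholds. You do three things the paper does not:
\begin{enumerate}
\item You rewrite the objectives as $\operatorname{tr}(\rho^2)$, $\operatorname{tr}(\sigma^2)$ and $\operatorname{tr}(\rho\sigma)$, and merge them into $-\tfrac12 D_{\mathrm{hs}}(\rho,\sigma)^2$. Up to an additive constant, this is exactly the cost used to train the QAOAEmbedding model.
\item You state the expressivity hypothesis explicitly. It is genuinely needed: for an $\vec{x}$-independent circuit the intra-set optimum is reached while every inter-set overlap stays $1$, so the lemma fails.
\item Your Markov-type bounds $\Theta_{AA}\le\arcsin\sqrt{2\delta M_a^2}$ and $\Theta_{AB}\ge\arccos\sqrt{2\delta M_aM_b}$ turn near-optimality into an actual proof of the approximate clauses.
\end{enumerate}
The third point also exposes something the paper's sketch hides. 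Passing from averaged to worst-case pairwise guarantees costs a factor depending on $M_a$ and $M_b$, so the required optimization accuracy must shrink as the training sets grow.

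The paper's route gains only brevity. Yours is the one that establishes the statement as written, under clearly stated hypotheses.
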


\begin{proof}
The proof follows from the variational principles of quantum state discrimination \cite{helstrom1969quantum}. In intra-set optimization, the objective function $\max_{\theta} \frac{1}{|A|^2} \sum_{i,j} |\langle a_i|a_j\rangle|^2$ is maximized when all inner products $|\langle a_i|a_j\rangle|^2$ are 1, meaning that all states $\ket{a_i}$ are identical up to a global phase. Geometrically, this forces all states within the set $A$ to coalesce toward a single representative state on the Bloch sphere. The same logic applies to the set $B$. In practice, finite training data and convergence thresholds ensure that this condition is met approximately, yielding $\Theta_{AA}, \Theta_{BB} \leq \epsilon$ for a small $\epsilon$. On the other hand, in inter-set optimization, the objective function $\min_{\theta} \frac{1}{|A||B|} \sum_{i,j} |\langle a_i|b_j\rangle|^2$ is minimized when the fidelity between any state in $A$ and any state in $B$ is zero. 
The fidelity is zero if and only if the states are orthogonal, i.e., $\text{span}(A) \perp \text{span}(B)$. For quantum states of a single qubit, this orthogonality corresponds to an angular separation of $\Theta_{AB} = \pi/2$.
\end{proof}

The fundamental challenge for the verifier $\mathcal{V}$ in this interactive system is to design an efficient protocol that certifies the prover's claim of achieving the condition $\Theta_{AB} \rightarrow \pi/2$, as mentioned in Lemma~\ref{lem:orth}. This must be accomplished under two  operational constraints:
\begin{enumerate}
    \item \textbf{Black-Box Verification:} The verifier possesses zero internal knowledge of the prover's unitary $U(\vec{x},\theta)$, including its circuit structure, parameters $\theta$ or the underlying hardware.
    \item \textbf{Limited Quantum Capabilities:} The verifier's quantum operations are restricted to perform simple, destructive projective measurements on the output states provided by the prover.
\end{enumerate}

\begin{figure*}[!t]
    \centering
    \includegraphics[width=0.65\linewidth]{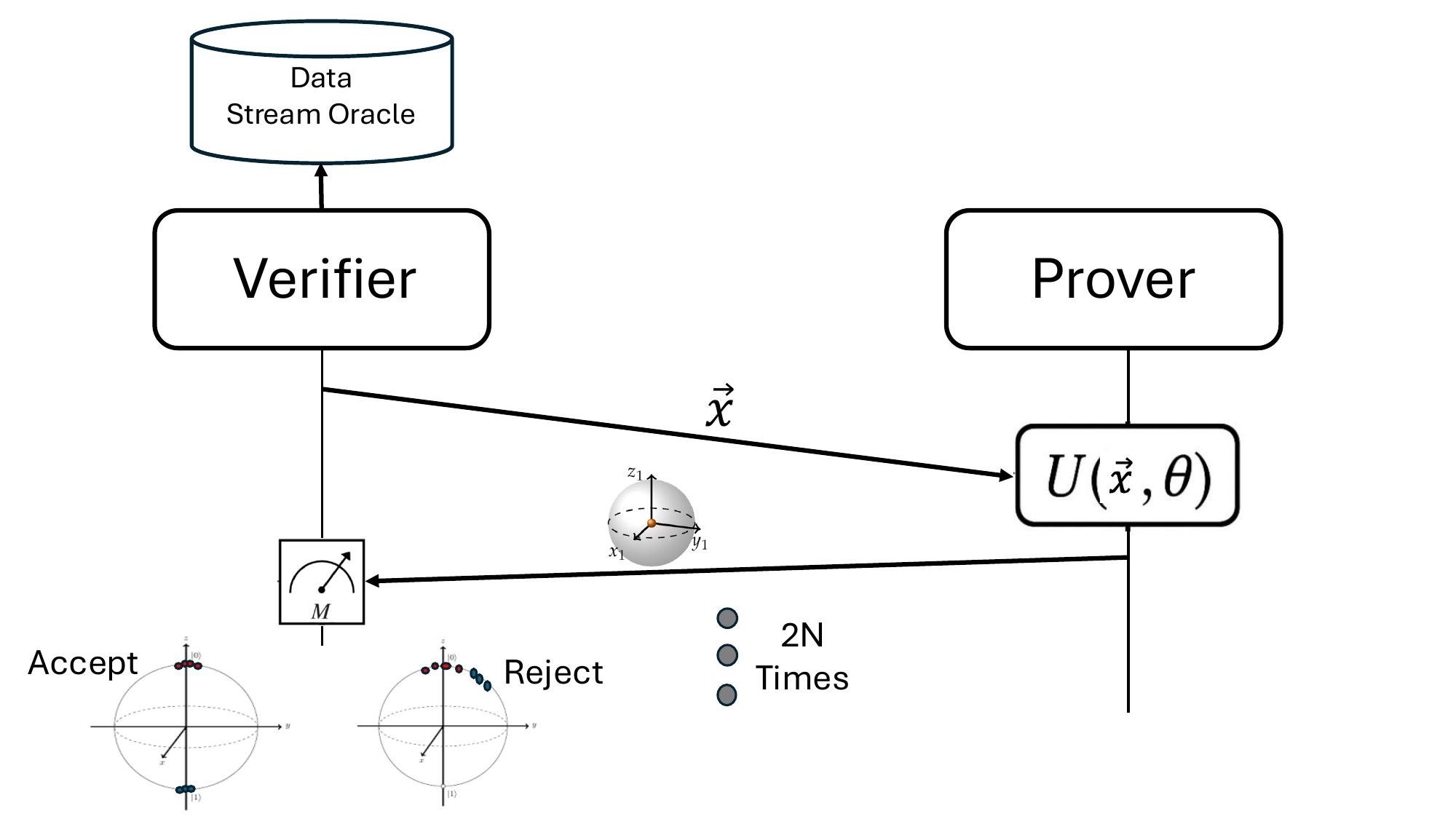}
    \caption{The proposed verification protocol.}
    \label{fig:arch}
\end{figure*}

\subsection{Protocol Details}
The Lemma~\ref{lem:orth} provides the basis for our verification procedure. A key insight is that, under optimal metric learning, all states within a single class become nearly identical. Consequently, the ensemble of qubit states for each class behaves like a single, unknown qubit state that has been prepared multiple times. This allows us to treat the two groups as two individual qubits on the Bloch sphere, and our goal reduces to estimating the angle between them. This leads to a natural verification protocol, illustrated in Figure~\ref{fig:arch}. The verifier can statistically reconstruct the quantum state of each group by collecting measurement outcomes in multiple bases and then estimate the separation angle between them. The protocol proceeds through the following sequence of steps:

\begin{enumerate}
    \item \textbf{Sampling:} The Verifier draws $2N$ samples from its data oracle: $N$ samples $\{\vec{x}_i\}$ from group $\Psi$ (label 0) and $N$ samples $\{\vec{x}_j\}$ from group $\Phi$ (label 1).

    \item \textbf{Group Allocation:} For each group ($\Psi$ and $\Phi$), the verifier divides its $N$ samples into three disjoint subgroups of size $N/3$. Each subgroup is assigned one of the three mutually unbiased measurement bases: the Standard basis $\{\ket{0}, \ket{1}\}$, the Hadamard basis $\{\ket{+}, \ket{-}\}$, or the Circular basis $\{\ket{+i}, \ket{-i}\}$.

    \item \textbf{Classical Communication:} The verifier sends the classical data points $\vec{x}$ (\textbf{without their labels}) one by one to the prover.

    \item \textbf{State Preparation:} For each received data point $\vec{x}$, the prover applies its claimed unitary embedding $U(\vec{x}, \theta)$ to the initial state $\ket{0}$ and sends the resulting qubit $\ket{\psi} = U(\vec{x}, \theta)\ket{0}$ back to the verifier.

    \item \textbf{Measurement:} The verifier measures the received qubit in the pre-determined basis corresponding to the sample's subgroup. It records the outcome (e.g., 0 for $\ket{0}$, $\ket{+}$, or $\ket{+i}$; and 1 for the opposite state).

    \item \textbf{State Reconstruction and Angle Estimation:} After all measurements have been completed, the verifier uses the recorded statistics to reconstruct the density matrix for each group and then computes the separation angle between them.
\end{enumerate}

The protocol transforms the classical measurement outputs into a reliable estimate of the inter-set separation angle, enabling a statistical claim about the embedding quality. Through this sequence of interactions, the verifier performs measurements and classical post-processing to produce an output $(flag, \hat{\Theta})$, where $flag \in \{\text{ACCEPT}, \text{REJECT}\}$ is determined by whether the estimated angle $\hat{\Theta}$ approaches $\pi/2$ within a predefined additive error $\epsilon$. 

Formally, to perform the state reconstruction, the verifier employs three mutually unbiased bases:
\begin{itemize}
    \item \textbf{Standard basis:} $\{\ket{0}, \ket{1}\}$
    \item \textbf{Hadamard basis:} $\{\ket{+}, \ket{-}\}$ where $\ket{+} = \frac{\ket{0} + \ket{1}}{\sqrt{2}}$ and $\ket{-} = \frac{\ket{0} - \ket{1}}{\sqrt{2}}$
    \item \textbf{Circular basis:} $\{\ket{+i}, \ket{-i}\}$ where $\ket{+i} = \frac{\ket{0} + i\ket{1}}{\sqrt{2}}$ and $\ket{-i} = \frac{\ket{0} - i\ket{1}}{\sqrt{2}}$
\end{itemize}

For each group, the measurement probabilities are estimated from the recorded statistics. For group $\Psi$:
\begin{align*}
    \hat{p}_{\Psi,0} &= \frac{3}{N} \sum_{i=1}^{N/3} \delta_{m_i,0} \quad \text{(Standard basis)} \\
    \hat{p}_{\Psi,+} &= \frac{3}{N} \sum_{i=N/3+1}^{2N/3} \delta_{m_i,+} \quad \text{(Hadamard basis)} \\
    \hat{p}_{\Psi,+i} &= \frac{3}{N} \sum_{i=2N/3+1}^{N} \delta_{m_i,+i} \quad \text{(Circular basis)}
\end{align*}
where $m_i$ is the measurement outcome for the $i$-th sample, and $\delta$ is the Kronecker delta. Similarly, we can obtain the probabilities for group $\Phi$:
\begin{align*}
    \hat{p}_{\Phi,0} &= \frac{3}{N} \sum_{i=1}^{N/3} \delta_{m_i,0} \quad \text{(Standard basis)} \\
    \hat{p}_{\Phi,+} &= \frac{3}{N} \sum_{i=N/3+1}^{2N/3} \delta_{m_i,+} \quad \text{(Hadamard basis)} \\
    \hat{p}_{\Phi,+i} &= \frac{3}{N} \sum_{i=2N/3+1}^{N} \delta_{m_i,+i} \quad \text{(Circular basis)}
\end{align*}

\begin{lemma}[Sufficiency]\label{lem:sufficiency}
Let $\ket{\Psi},\ket{\Phi}\in\mathbb{C}^2$ be two pure qubit states, and let 
$\rho_\Psi=\ket{\Psi}\!\bra{\Psi}$ and $\rho_\Phi=\ket{\Phi}\!\bra{\Phi}$ be their density matrices.  
Measuring each state in the three mutually unbiased bases (MUBs) of $\mathbb{C}^2$: the computational basis 
$\{\ket{0},\ket{1}\}$, the Hadamard basis $\{\ket{+},\ket{-}\}$, and the circular basis $\{\ket{i},\ket{-i}\}$, is sufficient to determine the angle $\theta$ between the states, where 
\[
\cos\theta = \big|\braket{\Psi|\Phi}\big|.
\]
\end{lemma}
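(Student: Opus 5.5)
The plan is to pass to the Bloch representation and show that the three measured probabilities per state determine its Bloch vector completely. From the Bloch vectors the overlap is then a closed-form function.

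\textbf{Step 1: express each state by Pauli expectations.} Any qubit density matrix can be written as
\[
\rho=\tfrac12\left(I+r_xX+r_yY+r_zZ\right), \qquad r_k=\operatorname{tr}(\rho\,\sigma_k).
\]
For a pure state the Bloch vector satisfies $\|\vec r\|=1$.

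\textbf{Step 2: identify each Bloch component with one basis.} The projectors of the three MUBs are
\[
\ket{0}\!\bra{0}=\tfrac12(I+Z),\qquad \ket{+}\!\bra{+}=\tfrac12(I+X),\qquad \ket{+i}\!\bra{+i}=\tfrac12(I+Y).
\]
Hence the Born probabilities are $p_0=(1+r_z)/2$, $p_+=(1+r_x)/2$ and $p_{+i}=(1+r_y)/2$. Inverting gives
\[
r_z=2p_0-1,\qquad r_x=2p_+-1,\qquad r_y=2p_{+i}-1.
\]
So the three probabilities, which are exactly the quantities $\hat p_{\Psi,0},\hat p_{\Psi,+},\hat p_{\Psi,+i}$ estimate, fix $\vec r_\Psi$ and therefore $\rho_\Psi$ uniquely. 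The same holds for $\vec r_\Phi$ and $\rho_\Phi$. In other words, the Pauli operators together with $I$ form an operator basis, so the three MUBs are tomographically complete.

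\textbf{Step 3: compute the overlap from the two Bloch vectors.} Using $\operatorname{tr}(\sigma_k\sigma_l)=2\delta_{kl}$ and $\operatorname{tr}\sigma_k=0$, a one-line computation gives
\[
|\braket{\Psi|\Phi}|^2=\operatorname{tr}(\rho_\Psi\rho_\Phi)=\tfrac12\left(1+\vec r_\Psi\cdot\vec r_\Phi\right).
\]
It follows that
\[
\cos\theta=\sqrt{\tfrac{1+\vec r_\Psi\cdot\vec r_\Phi}{2}}.
\]
The right-hand side depends only on the six measured probabilities. Since $\cos\theta\ge 0$, this determines $\theta\in[0,\pi/2]$ uniquely via $\arccos$.

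\textbf{Main obstacle.} There is no real technical obstacle; the only points needing care are interpretive. First, the global phase of $\ket{\Psi}$ is unobservable, so one must note that $\theta$ depends only on $|\braket{\Psi|\Phi}|$ and not on the phase. The identity $|\braket{\Psi|\Phi}|^2=\operatorname{tr}(\rho_\Psi\rho_\Phi)$ handles this. Second, "sufficient" should be read in the sense of exact probabilities, i.e.\ infinitely many copies. The finite-sample version, where estimated $\hat{\vec r}$ may have norm different from $1$, is a separate concentration argument and is not part of this lemma.
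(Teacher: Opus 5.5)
Your proof is correct and follows essentially the same route as the paper: you write each state in Bloch form, read off $r_x,r_y,r_z$ from the Hadamard, circular and computational basis statistics, and then use $|\braket{\Psi|\Phi}|^2=\operatorname{tr}(\rho_\Psi\rho_\Phi)=\tfrac12(1+\vec r_\Psi\cdot\vec r_\Phi)$ to fix $\theta$. Your explicit projector identities, and your remarks on global phase and on reading ``sufficient'' as exact probabilities rather than finite samples, spell out points the paper leaves implicit.
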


\begin{proof}
Any pure qubit state $\rho = \ket{\psi}\!\bra{\psi}$ can be written in Bloch form:
\[
\rho = \frac{1}{2}\left(I + r_x \sigma_x + r_y \sigma_y + r_z \sigma_z\right),
\]
where $(r_x,r_y,r_z)$ is the Bloch vector of the state.  
Measuring $\rho$ in the three qubit MUBs yields outcome probabilities that correspond exactly to the expectation values of $\sigma_x,\sigma_y,\sigma_z$:
\[
r_x = \operatorname{tr}(\rho\,\sigma_x),\qquad
r_y = \operatorname{tr}(\rho\,\sigma_y),\qquad
r_z = \operatorname{tr}(\rho\,\sigma_z).
\]
Thus, measurements in the three MUBs provide all the information needed to fully determine the Bloch vector of each qubit state~\cite{bandyopadhyay2002new}.

Once the Bloch vectors $\vec{r}_\Psi$ and $\vec{r}_\Phi$ are determined, the fidelity between the two pure states is uniquely fixed:
\[
\big|\braket{\Psi|\Phi}\big|^2
= \operatorname{tr}(\rho_\Psi \rho_\Phi)
= \frac{1 + \vec{r}_\Psi \cdot \vec{r}_\Phi}{2}.
\]
Therefore, the angle $\theta$ between the two states, defined by 
$\cos\theta = \big|\braket{\Psi|\Phi}\big|$,  
is completely determined.  
\end{proof}

\begin{algorithm}[!t]
\small
\caption{Verifying Quantum Metric Learning.}\label{alg:angle_estimation}
\begin{algorithmic}
\Require $N$ samples from group $\Psi = \{\ket{\psi_1},...,\ket{\psi_N}\}$ and $N$ samples from group $\Phi = \{\ket{\phi_1},...,\ket{\phi_N}\}$
\Ensure Estimate $\hat{\Theta}$ of the average angle between groups

\State \textbf{Input:}
\State $\Psi = \{\ket{\psi_1},...,\ket{\psi_N}\}$ - Group $\Psi$ samples
\State $\Phi = \{\ket{\phi_1},...,\ket{\phi_N}\}$ - Group $\Phi$ samples

\State \textbf{Measurement Bases:}
\State $\mathcal{B}_0 = \{\ket{0}, \ket{1}\}$ - Standard basis
\State $\mathcal{B}_1 = \{\ket{+}, \ket{-}\}$ - Hadamard basis
\State $\mathcal{B}_2 = \{\ket{+i}, \ket{-i}\}$ - Circular basis

\State \textbf{Procedure:}
\State Randomly select $N/3$ samples from each group for each basis

\For{$i \gets 1$ to $N/3$}
    \State Measure $\ket{\psi_i}$ in $\mathcal{B}_0$, record outcome $x_i \in \{0,1\}$
    \State Measure $\ket{\phi_i}$ in $\mathcal{B}_0$, record outcome $y_i \in \{0,1\}$
\EndFor

\For{$i \gets N/3+1$ to $2N/3$}
    \State Measure $\ket{\psi_i}$ in $\mathcal{B}_1$, record outcome $x_i \in \{0,1\}$
    \State Measure $\ket{\phi_i}$ in $\mathcal{B}_1$, record outcome $y_i \in \{0,1\}$
\EndFor

\For{$i \gets 2N/3+1$ to $N$}
    \State Measure $\ket{\psi_i}$ in $\mathcal{B}_2$, record outcome $x_i \in \{0,1\}$
    \State Measure $\ket{\phi_i}$ in $\mathcal{B}_2$, record outcome $y_i \in \{0,1\}$
\EndFor

\State \textbf{Compute Group Statistics:}
\State $\hat{p}_{\Psi,0} \gets \frac{3}{N}\sum_{i=1}^{N/3} \delta_{x_i,0}$ \Comment{$\Psi$ in $\mathcal{B}_0$}
\State $\hat{p}_{\Phi,0} \gets \frac{3}{N}\sum_{i=1}^{N/3} \delta_{x_j,0}$ \Comment{$\Phi$ in $\mathcal{B}_0$}
\State $\hat{p}_{\Psi,+} \gets \frac{3}{N}\sum_{i=N/3+1}^{2N/3} \delta_{x_i,+}$ \Comment{$\Psi$ in $\mathcal{B}_1$}
\State $\hat{p}_{\Phi,+} \gets \frac{3}{N}\sum_{i=N/3+1}^{2N/3} \delta_{x_j,+}$ \Comment{$\Phi$ in $\mathcal{B}_1$}
\State $\hat{p}_{\Psi,+i} \gets \frac{3}{N}\sum_{i=2N/3+1}^{N} \delta_{x_i,+i}$ \Comment{$\Psi$ in $\mathcal{B}_2$}
\State $\hat{p}_{\Phi,+i} \gets \frac{3}{N}\sum_{i=2N/3+1}^{N} \delta_{x_j,+i}$ \Comment{$\Phi$ in $\mathcal{B}_2$}

\State \textbf{Reconstruct Density Matrices:}
\State For $\Psi$:
\State $\rho_{\Psi} \gets \begin{bmatrix}
\hat{p}_{\Psi,0} & a_{\Psi} \\
a_{\Psi}^* & 1-\hat{p}_{\Psi,0}
\end{bmatrix}$ 
\State where $a_{\Psi} = \frac{1}{2}(2\hat{p}_{\Psi,+}-1 + i(2\hat{p}_{\Psi,+i}-1))$

\State For $\Phi$:
\State $\rho_{\Phi} \gets \begin{bmatrix}
\hat{p}_{\Phi,0} & a_{\Phi} \\
a_{\Phi}^* & 1-\hat{p}_{\Phi,0}
\end{bmatrix}$ 
\State where $a_{\Phi} = \frac{1}{2}(2\hat{p}_{\Phi,+}-1 + i(2\hat{p}_{\Phi,+i}-1))$

\State \textbf{Compute Fidelity:}
\State $M \gets \sqrt{\rho_{\Psi}} \, \rho_{\Phi} \, \sqrt{\rho_{\Psi}}$
\State $\mathcal{F} \gets \left(\mathrm{Tr}(\sqrt{M})\right)^2$

\State \textbf{Estimate Angle:}
\State $\hat{\Theta} \gets \arccos\!\left(\sqrt{\mathcal{F}}\right)$

\State \textbf{Output:} $\hat{\Theta}$ (ACCEPT if $\hat{\Theta} \geq \pi/2 - \gamma$, REJECT otherwise)
\end{algorithmic}
\end{algorithm}


The verifier reconstructs the density matrix for each group using the Bloch sphere representation. For a single-qubit state, the density matrix is:

\[
\rho = \frac{1}{2}(I + \vec{r} \cdot \vec{\sigma}) = \frac{1}{2} \begin{pmatrix} 
1 + r_z & r_x - i r_y \\ 
r_x + i r_y & 1 - r_z 
\end{pmatrix},
\]

where $\vec{r} = (r_x, r_y, r_z)$ is the Bloch vector and $\vec{\sigma}$ are the Pauli matrices.
The Bloch vector components are determined from the measurement probabilities:

\begin{align*}
r_x &= 2\hat{p}_{+} - 1, & r_y &= 2\hat{p}_{+i} - 1, & r_z &= 2\hat{p}_{0} - 1
\end{align*}

Applying this to both groups yields their density matrices:
The reconstructed density matrices are:

\begin{align*}
\rho_\Psi &= \frac{1}{2} \begin{pmatrix} 
1 + r_z^\Psi & r_x^\Psi - i r_y^\Psi \\ 
r_x^\Psi + i r_y^\Psi & 1 - r_z^\Psi 
\end{pmatrix}
= \begin{pmatrix}
\hat{p}_{\Psi,0} & a_{\Psi} \\
a_{\Psi}^* & 1-\hat{p}_{\Psi,0}
\end{pmatrix} \\
\rho_\Phi &= \frac{1}{2} \begin{pmatrix} 
1 + r_z^\Phi & r_x^\Phi - i r_y^\Phi \\ 
r_x^\Phi + i r_y^\Phi & 1 - r_z^\Phi 
\end{pmatrix}
= \begin{pmatrix}
\hat{p}_{\Phi,0} & a_{\Phi} \\
a_{\Phi}^* & 1-\hat{p}_{\Phi,0}
\end{pmatrix},\\
a_{\Psi} &= \tfrac{1}{2}\left(2\hat{p}_{\Psi,+}-1 + i(2\hat{p}_{\Psi,+i}-1)\right), \\
a_{\Phi} &= \tfrac{1}{2}\left(2\hat{p}_{\Phi,+}-1 + i(2\hat{p}_{\Phi,+i}-1)\right)
\end{align*}




The separation angle between the two groups is estimated using quantum fidelity. For mixed states, the fidelity is defined as:

\[
\mathcal{F}(\rho_\Psi, \rho_\Phi) = \left( \mathrm{Tr} \sqrt{ \sqrt{\rho_\Psi} \rho_\Phi \sqrt{\rho_\Psi} } \right)^2
\]

The estimated Bures angle separation is then:

\[
\hat{\Theta}_\text{Bures} = \arccos\left( \sqrt{\mathcal{F}(\rho_\Psi, \rho_\Phi)} \right)
\]

The Bures angle $\hat{\Theta}_\text{Bures}$ provides a natural geometric interpretation of quantum state distinguishability, generalizing the concept of angular separation to mixed states. When the groups are maximally separated, $\hat{\Theta}_\text{Bures} \approx \pi/2$. The algorithm can verify \textbf{any angle} claimed by the prover, allowing us to select the threshold margin $\gamma$ adaptively based on the claimed angle as quantified in Section~\ref{sub:sim_theory}. The verifier compares this estimate to a threshold $\pi/2-\gamma$ to accept or reject the prover's claim of optimal separation.




\subsection{Completeness and Soundness Analysis}
The following two lemmas analyze the completeness and soundness of our protocol.

\begin{lemma}[Completeness]\label{lem:completeness}
For any input $x \in \mathcal{L}$, there exist a quantum witness $\ket{\psi}$ and a polynomial-time quantum verifier $V$ such that: 
\[
\Pr\!\left[\, V(x,\ket{\psi}) \text{ accepts} \,\right] \ge 1 - \epsilon(N),
\]
where $\epsilon(N)$ is a negligible function in $N$.
\end{lemma}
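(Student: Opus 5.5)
We instantiate the statement for our protocol. Here $x\in\mathcal{L}$ means the prover's claim is true: the honest embedding $U(\vec{x},\theta)$ satisfies the conclusion of Lemma~\ref{lem:orth}. Thus every state in group $\Psi$ is within angle $\epsilon$ of some pure state $\ket{\Psi}$, every state in group $\Phi$ is within $\epsilon$ of some $\ket{\Phi}$, and $|\braket{\Psi|\Phi}|=\cos\Theta_{AB}$ with $\Theta_{AB}\ge \pi/2-\epsilon$. The witness is the stream of $2N$ qubits the honest prover returns, and $V$ is Algorithm~\ref{alg:angle_estimation}. $V$ is polynomial time: it performs $2N$ single-qubit measurements and then $O(1)$ classical $2\times 2$ matrix operations. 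It remains to show that $\hat\Theta\ge \pi/2-\gamma$ except with probability negligible in $N$, for a threshold $\gamma$ chosen larger than $\epsilon$ plus the statistical slack.

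\textbf{Step 1 (population targets).} Because the sample-to-basis assignment is random and independent of the data, each outcome in a given group and basis is an independent Bernoulli variable. Averaged over the class distribution, the mean in basis $b$ is $\operatorname{tr}(\bar\rho\,\Pi_b)$, where $\bar\rho$ is the class-average state. Since all class states lie within angle $\epsilon$ of $\ket{\Psi}$, the Bloch vector of $\bar\rho_\Psi$ is within $O(\epsilon)$ of $\vec r_\Psi$, and likewise for $\Phi$. By Lemma~\ref{lem:sufficiency}, the true Bures angle of the two target states is therefore within $O(\epsilon)$ of $\Theta_{AB}$.

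\textbf{Step 2 (concentration).} Each of the six estimators $\hat p$ averages $N/3$ independent bounded variables. Hoeffding's inequality gives $\Pr[|\hat p-p|>\delta]\le 2e^{-2N\delta^2/3}$, and a union bound over the six estimators yields total failure probability at most $12e^{-2N\delta^2/3}$. For any fixed $\delta$ this is negligible, indeed exponentially small, in $N$. On the complementary event, every reconstructed Bloch vector component is within $2\delta$ of its target.

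\textbf{Step 3 (continuity of the estimator).} This is the main obstacle. The reconstructed matrix may fail to be positive semidefinite, and $\arccos\sqrt{\mathcal F}$ is not Lipschitz near $\mathcal F=0$, which is exactly the regime $\Theta\approx\pi/2$ we care about. I would first project $\hat\rho$ onto the Bloch ball, i.e.\ rescale $\vec r$ to norm at most $1$; this projection is a contraction, so it does not enlarge errors. I would then use the qubit formula
\[
\mathcal F(\rho,\sigma)=\operatorname{tr}(\rho\sigma)+2\sqrt{\det\rho\,\det\sigma},
\]
which shows $\mathcal F$ is continuous in the Bloch vectors, with $|\mathcal F-\mathcal F^\ast|\le C\sqrt{\delta+\epsilon}$ once the square-root term is accounted for. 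Since $\cos\hat\Theta=\sqrt{\mathcal F}$, a bound $\mathcal F\le \eta$ gives $\hat\Theta\ge\pi/2-\arcsin\sqrt{\eta}$. Choosing $\gamma\ge \arcsin\sqrt{C\sqrt{\delta+\epsilon}+\cos^2\Theta_{AB}}$, and picking $\delta$ small, ensures acceptance on the good event. This completes completeness with $\epsilon(N)=12e^{-2N\delta^2/3}$.
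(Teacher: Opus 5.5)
Your argument is correct, and it is a genuinely different and much more careful route than the paper's. The paper's proof is a short sketch. It assumes the honest prover's class states are \emph{identical} copies, cites the $O(1/\sqrt{N})$ accuracy of standard tomography, and invokes continuity of $\Theta$ in the Bloch vectors to conclude that $\hat\Theta$ is within a ``negligible'' amount of the true angle. That sketch conflates estimation accuracy with failure probability: $O(1/\sqrt{N})$ is not negligible, and no tail bound is given. It also ignores that the reconstructed matrices may fail to be positive, and that $\arccos\sqrt{\mathcal{F}}$ is not Lipschitz at $\mathcal{F}=0$, which is exactly the regime of interest.

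Your proof repairs each of these points. You allow the $\epsilon$-spread from Lemma~\ref{lem:orth}, estimating class-average states whose Bloch vectors lie within $2\sin\epsilon$ of the representatives. You decouple a \emph{fixed} accuracy $\delta$ from an exponentially small failure probability via Hoeffding and a union bound. You handle non-physical reconstructions by projecting onto the Bloch ball, which cannot increase the distance to targets lying in the ball. And you handle the non-Lipschitz point through the closed form $\mathcal{F}=\operatorname{tr}(\rho\sigma)+2\sqrt{\det\rho\det\sigma}$.

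The price has two parts. First, your $V$ is not literally Algorithm~\ref{alg:angle_estimation}. That is legitimate, since the lemma only asserts that some verifier exists, but say so explicitly instead of writing ``$V$ is Algorithm 1''. Second, $\gamma$ must be a constant depending on $\epsilon$ and $\delta$. This conflicts with the negligible $\gamma(N)$ assumed in the paper's soundness proof, but that is an inconsistency in the paper which your analysis exposes, not a gap in your argument.

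Some minor points:
\begin{itemize}
\item In Step 1, citing Lemma~\ref{lem:sufficiency} is misplaced, because the class averages are mixed states. The triangle inequality for the Bures angle gives your claim, and in any case Step 3 only uses the Bloch-vector bound.
\item Define $\mathcal{F}^\ast=\cos^2\Theta_{AB}$ explicitly.
\item Your bound $C\sqrt{\delta+\epsilon}$ is loose. Because the targets are pure, after projection $1-|\hat{\vec r}|^2\le 2|\hat{\vec r}-\vec r_\Psi|$. This makes $|\mathcal{F}-\mathcal{F}^\ast|$ linear in $\delta+\epsilon$, so the required $\gamma$ is of order $\sqrt{\delta+\epsilon}$ rather than $(\delta+\epsilon)^{1/4}$.
\end{itemize}
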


\begin{proof}
If the prover is honest, all states corresponding to the same class label are identical copies of the same underlying qubit state. Given $N$ independent copies, the verifier can estimate the Bloch vector of the class state with accuracy scaling as $O(1/\sqrt{N})$ by performing repeated measurements~\cite{paris2004quantum}. Since the angle $\Theta$ between the two classes depends continuously on their Bloch vectors, this estimation accuracy guarantees that the verifier's estimate $\hat{\Theta}$ differs from the true angle by at most a negligible amount $\epsilon(N)$.  
Therefore, an honest prover enables the verifier to compute an angle estimate within the acceptance threshold, leading to acceptance with probability at least $1-\epsilon(N)$.
\end{proof} 

\begin{lemma}[Soundness]\label{lem:soundness}
For any input $x \notin \mathcal{L}$ and any (possibly malicious) prover $\mathcal{P}^*$, the verifier $V$ accepts with probability at most
\[
\Pr\!\left[\, V(x,\mathcal{P}^*) \text{ accepts} \,\right] \leq \delta(N),
\]
where $\delta(N)$ is negligible in $N$.
\end{lemma}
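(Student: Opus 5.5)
The argument mirrors Lemma~\ref{lem:completeness}: reduce soundness to a concentration bound on the six empirical frequencies, then push that bound through the continuous map from frequencies to $\hat{\Theta}$. First fix what $x\notin\mathcal{L}$ means operationally. The prover's embedding fails the promise, so the true separation of the two groups satisfies $\Theta^\ast < \pi/2-\gamma-\eta$ for some gap $\eta>0$.

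\textbf{Averaging over cheating strategies.} A malicious $\mathcal{P}^\ast$ may send arbitrary states, possibly mixed or adaptively chosen. However, labels are withheld and the basis assignment is random. The prover therefore cannot correlate its output with either the class label or the measurement basis. Its best strategy is to send, for a sample of class $c$, a state whose average over the sampling distribution is some fixed density matrix $\bar\rho_c$. This holds even against adaptive strategies: conditioned on the history, each outcome in basis $b$ is a Bernoulli variable whose mean, averaged over the randomly assigned samples, equals $\mathrm{tr}(\bar\rho_c\Pi_b)$. I will use a martingale (Azuma--Hoeffding) version of the Hoeffding bound to handle this dependence.

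\textbf{Concentration and continuity.}
\begin{enumerate}
\item Apply Hoeffding to each of the six frequencies $\hat p_{c,b}$, each built from $N/3$ outcomes. This gives $|\hat p_{c,b}-\mathrm{tr}(\bar\rho_c\Pi_b)|\le t$ except with probability $2e^{-2Nt^2/3}$.
\item Take a union bound over all six frequencies.
\item Conclude that the reconstructed Bloch vectors are within $2\sqrt3\,t$ of those of $\bar\rho_c$, exactly as in Lemma~\ref{lem:sufficiency}.
\item Invoke continuity of the fidelity, and hence of the Bures angle, in trace distance, e.g.\ Fuchs--van de Graaf. This yields $|\hat\Theta-\Theta(\bar\rho_\Psi,\bar\rho_\Phi)|\le C\sqrt{t}$.
\end{enumerate}

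\textbf{Relating the averaged states to the claim.} The verifier accepts only if $\hat\Theta\ge\pi/2-\gamma$. I will argue that the true Bures angle of the averaged states satisfies $\Theta(\bar\rho_\Psi,\bar\rho_\Phi)\le\Theta^\ast$. This should follow because mixing can only decrease distinguishability: by joint concavity of the root fidelity, averaging cannot increase the Bures angle above what the actual ensembles exhibit. Choosing $t$ with $C\sqrt t<\eta$, acceptance then requires one of the Hoeffding events to fail. That gives $\delta(N)\le 12e^{-2Nt^2/3}$, which is negligible in $N$.

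\textbf{Main obstacle.} The hardest step is the one in the previous paragraph: defining $\Theta^\ast$ so that a false claim really forces the averaged states $\bar\rho_\Psi,\bar\rho_\Phi$ to have Bures angle bounded away from $\pi/2$. A cheating prover could, for instance, use label-dependent information leaked through the data $\vec x$ itself. I would handle this by defining ``$x\notin\mathcal{L}$'' directly in terms of the class-averaged output states of whatever prover is interacting, so that the statement becomes a pure statistical-estimation bound. The concavity argument is only needed if one insists on a pure-state notion of separation.
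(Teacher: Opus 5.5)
Your route differs from the paper's. The paper argues through the withheld labels. $\mathcal{P}^*$ must guess the hidden $\Psi/\Phi$ partition of the $2N$ inputs, which the paper asserts succeeds with probability at most $2^{-2N+o(N)}$. Otherwise each reconstructed cluster mixes both classes, the fidelity is near $1$, and acceptance is bounded by $\Pr[\hat G=G]+\Pr[\hat\Theta_{\mathrm{Bures}}\ge\pi/2-\gamma]$.

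Your argument does not depend on the labels being hidden. You place the no-instance condition on the class-averaged states and get soundness from Hoeffding on the six frequencies, continuity of the Bures angle, and joint concavity of $\sqrt{\mathcal F}$. This buys three things:
\begin{itemize}
\item an explicit exponential bound;
\item no need for $\vec x$ to be label-free, which is the premise of the paper's guessing bound and exactly the leakage you flag;
\item control of the estimation term that the paper's final bound leaves unquantified.
\end{itemize}
Your constant gap $\eta$ is also necessary, since the paper's negligible threshold $\gamma(N)$ conflicts with the $O(1/\sqrt N)$ accuracy used for completeness.

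The price is that your secret becomes the basis assignment. Each round's basis must be uniformly random and hidden from the prover, which the paper's protocol does not specify (Algorithm~\ref{alg:angle_estimation} lists the subgroups in consecutive blocks). Consider a basis-aware prover that answers class-$0$ inputs with $\ket{0},\ket{+},\ket{+i}$ in $Z$, $X$, $Y$ rounds and class-$1$ inputs with the orthogonal states. It produces reconstructed vectors $\pm(1,1,1)$, hence $\hat\Theta=\pi/2$ after normalization. Yet its class-averaged states have Bloch vectors $\pm\frac13(1,1,1)$ and Bures angle $\arccos\sqrt{2/3}\approx 0.62$. More generally, such a prover can inflate the reconstructed Bloch-vector separation by up to a factor $\sqrt3$ beyond what any single embedding achieves.

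\textbf{The genuine gap is provers with memory.} Your Step~1 needs the $N/3$ outcomes in each cell to be independent with mean $\mathrm{tr}(\bar\rho_c\Pi_b)$ for a fixed $\bar\rho_c$. Your sentence claiming this survives adaptivity is false.

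Take a prover that flips one coin $c$ at the start and then sends $\ket{y(\vec x)\oplus c}$ in every round. If $\bar\rho_c$ averages over the prover's randomness (the only way it is a fixed state), both classes give $I/2$, so this is a no-instance under your prover-dependent definition. Yet conditioned on $c$ the verifier faces a perfectly separating embedding and accepts with probability tending to $1$. The outcomes are correlated through $c$ and do not concentrate around $\mathrm{tr}(\bar\rho_c\Pi_b)$. For a genuinely adaptive prover, the round-$k$ state depends on the history. Azuma--Hoeffding then only concentrates $\hat p_{c,b}$ around a running average of history-dependent means, not around any fixed state.

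To repair this:
\begin{itemize}
\item[(a)] Define the no-instance independently of the prover: ``no deterministic embedding $\vec x\mapsto\ket{\psi(\vec x)}$ has class-averaged states with Bures angle at least $\pi/2-\gamma-\eta$.'' This is a precise form of the paper's ``no unitary embedding exists \dots''.
\item[(b)] Condition on the prover's shared randomness. Any remaining per-round mixed output is a mixture, with input-independent weights, of deterministic pure embeddings, so joint concavity of $\sqrt{\mathcal F}$ keeps the averaged angle below threshold.
\item[(c)] For history-dependent maps, compare with the empirical average of the per-round states. This needs a sampling step over the hidden random assignment of rounds to bases.
\end{itemize}
Qubits entangled across rounds need a genuinely quantum sampling argument, because earlier bases shape later conditional states; the paper merely asserts its bound for such strategies too. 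As written, your proof covers only memoryless provers applying one fixed map.

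Finally, the reconstructed Bloch vectors can have norm exceeding $1$, with probability bounded away from zero when the true state is pure. Then $\hat\rho$ is not a state and Fuchs--van de Graaf does not apply. Project onto the Bloch ball first; because the true vectors lie in the ball, this does not increase the $2\sqrt3\,t$ error.
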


\begin{proof}
The verifier begins by sampling $2N$ classical inputs:
\[
\{x_1,\ldots,x_N\} \sim \Psi, \qquad 
\{y_1,\ldots,y_N\} \sim \Phi ,
\]
and sends these $2N$ samples to $\mathcal{P}^*$ \emph{without revealing} which elements came from $\Psi$ and which came from $\Phi$.  Thus, from the prover's perspective, each incoming sample is drawn from the uniform mixture
\[
\frac{1}{2}\Psi + \frac{1}{2}\Phi.
\]
Because $x\notin\mathcal{L}$, no unitary embedding exists that produces two clusters with an angle close to $\pi/2$ between them. Nevertheless, the malicious prover $\mathcal{P}^*$ attempts to output states such that approximately $N$ of them resemble a ``$\Psi$ cluster'' and the remaining $N$ resemble a ``$\Phi$ cluster,'' hoping that the verifier's reconstruction of the two group states yields large angular separation. Since the prover receives the $2N$ samples in an arbitrary order and the verifier has hidden all group information, $\mathcal{P}^*$ does not know which of the $2N$ inputs came from $\Psi$ or $\Phi$. Thus, $\mathcal{P}^*$ faces a labeling problem: it attempts to assign each of the $2N$ samples to one of two clusters, while guessing which $N$ belong to the first group and which $N$ belong to the second.

Let $G$ be the hidden true labeling of the $2N$ samples, and let $\hat{G}$ be the prover's (implicit) guessed labeling used when preparing the quantum states. Any strategy for $\mathcal{P}^*$ induces a probability of correctly guessing the hidden labels. Since samples are given to $\mathcal{P}^*$ without distinguishing information, no strategy can outperform random assignment on more than a negligible fraction of inputs.  
Formally, for any (possibly entangled) strategy of $\mathcal{P}^*$, we have: 
\[
\Pr[\hat{G}=G] \leq 2^{-2N + o(N)} = \operatorname{negl}(N).
\]

Now, If $\mathcal{P}^*$ cannot correctly guess the partitions of the $2N$ samples, then its output states cannot form two internally consistent clusters. In particular, with a high probability, each  cluster produced by $\mathcal{P}^*$ contains a non-negligible mixture of true $\Psi$ and true $\Phi$ samples. Hence, when the verifier reconstructs the two cluster states using measurements in the standard, Hadamard and circular bases, the reconstructed density matrices
\[
\tilde{\rho}_\Psi, \qquad \tilde{\rho}_\Phi
\]
will each approximate a mixture of the true underlying distributions.  
Because mixtures contract quantum distances, the fidelity between them satisfies
\[
\mathcal{F}(\tilde{\rho}_\Psi,\tilde{\rho}_\Phi)
= \left( \mathrm{Tr}\sqrt{\sqrt{\tilde{\rho}_\Psi}\tilde{\rho}_\Phi\sqrt{\tilde{\rho}_\Psi}} \right)^2
\geq 1 - \operatorname{poly}^{-1}(N),
\]
and consequently their Bures angle satisfies
\[
\Theta_{\mathrm{Bures}}(\tilde{\rho}_\Psi,\tilde{\rho}_\Phi)
= \arccos\!\left(\sqrt{\mathcal{F}(\tilde{\rho}_\Psi,\tilde{\rho}_\Phi)}\right)
\leq \operatorname{poly}^{-1}(N).
\]

The verifier accepts only if the estimated separation angle is close to $\pi/2$, i.e.,
\[
\hat{\Theta}_{\mathrm{Bures}} \geq \frac{\pi}{2} - \gamma(N),
\]
for some negligible threshold $\gamma(N)$. Thus,
\begin{align}
   \Pr\!\left[\,V(x,\mathcal{P}^*) \text{ accepts}\,\right]
&\leq \Pr[ \hat{G}=G ] \\ &+ \Pr[\hat{\Theta}_{\mathrm{Bures}} \geq \tfrac{\pi}{2}-\gamma(N)] \\
&\leq \operatorname{negl}(N)
\end{align}
establishing soundness.
\end{proof}

\section{Experiment Evaluation}
\label{experiment}
This section presents a comprehensive evaluation of the proposed verification algorithm. The algorithm estimates the angle between two groups of quantum states using measurements in three mutually unbiased bases (standard, Hadamard, and circular). 

\subsection{Experiment Setup}
We start by simulating quantum metric learning models to verify the correctness of the proposed algorithm for the different separation angle between the groups. Then, we present the results of the verification of a real QAOAEmbedding quantum metric learning model with a claimed angle $0.3\pi$~\cite{lloyd2020quantum}. The experiments is run on PennyLane which is an open-source Python library specifically designed for quantum machine learning~\cite{bergholm2018pennylane}.


\subsection{Simulation Results}
\label{sub:sim_theory}
Figure~\ref{fig:angle} presents a simulation of quantum state angle estimation between two ensembles of qubit states. The data are generated by first creating random pure qubit states and then systematically generating paired states with controlled angular separations ranging from  $0$ to $\pi/2$ radians. To simulate realistic experimental conditions where qubit states within the same group may exhibit natural variations, we added a small Gaussian perturbation to the states in each individual group, representing typical quantum states in the same group. The figure shows that our estimation protocol accurately recovers the ground truth angular separation between state ensembles, as evidenced by the close alignment between estimated angles and the perfect true angles. This validation confirms the robustness of the proposed verification protocol even in the presence of realistic intra-group variations.

\begin{figure}[!t]
    \centering
    \includegraphics[width=0.7\linewidth]{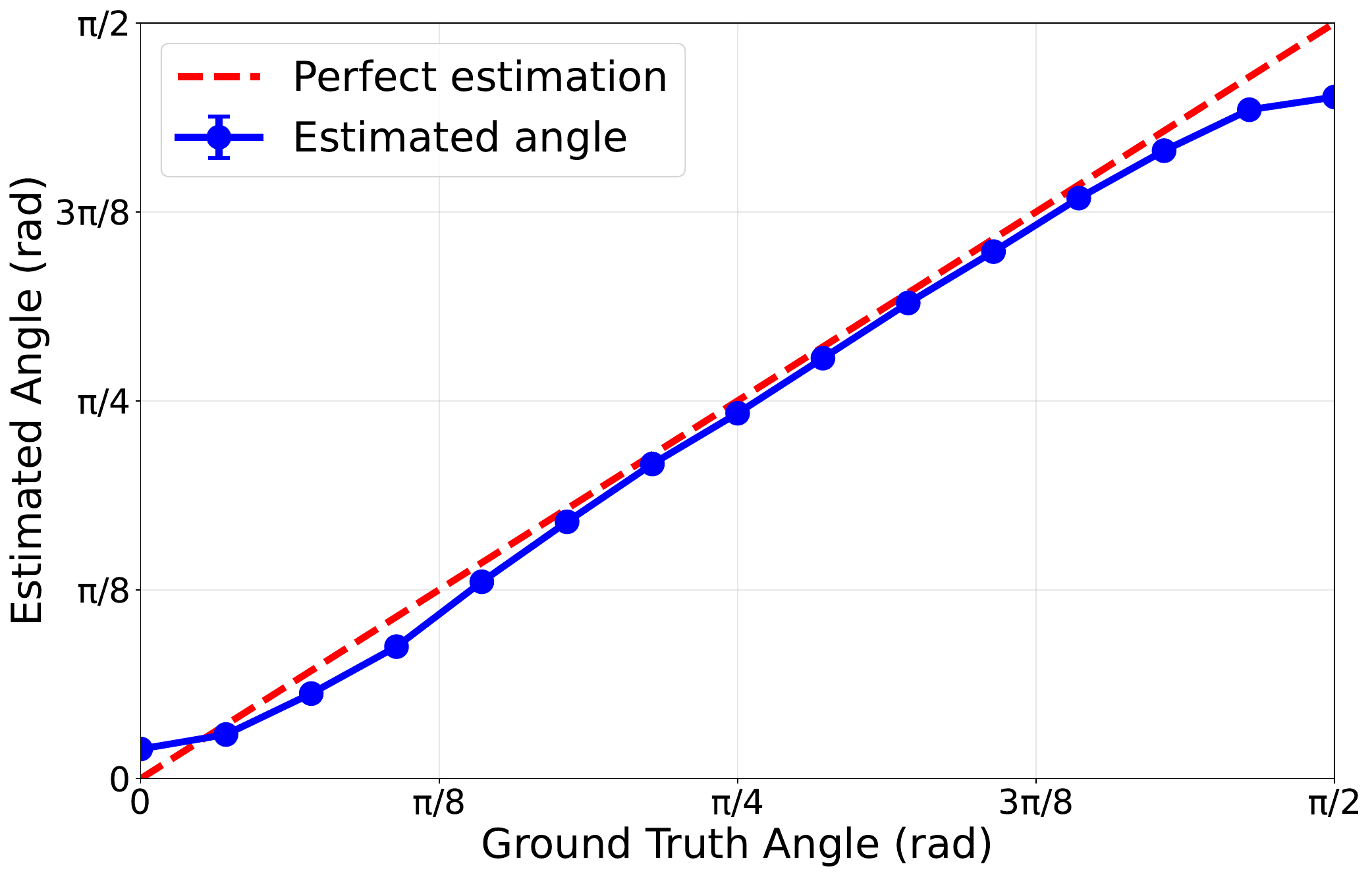}
    \caption{Angle estimation quality for the different theoretical models.}
    \label{fig:angle}
\end{figure}

\subsection{Verifying QAOAEmbedding Model}
We implement and deploy the verification protocol to evaluate a real quantum metric learning model, the PennyLane's QAOAEmbedding. This model is a variational quantum circuit that takes data from two classes and embeds them into quantum states. The model is trained using RMSPropOptimizer with step size 0.01 to maximize separation between classes in Hilbert space. Training minimizes the cost function $1 - 0.5 * (-2ab + aa + bb)$, where $aa$ and $bb$ are intra-class overlaps (within-class) and ab is the inter-class overlap (between-class), all computed via SWAP test measurements. The SWAP test is used during training to compute these overlaps, which drive optimization. Our verification algorithm reconstructs density matrices from measurements in the three bases and estimates the angle between groups. Figure~\ref{fig:samples} shows the effect of increasing the number of samples $N$ on the estimated angle. The true angle is calculated by approximating the density matrix for each group using all data samples. As expected, the figure shows that, as we increase the number of samples $N$, we obtain a better estimate of the angle between the two groups. Figure~\ref{fig:fidelity} further shows the effect of increasing the number of samples $N$ on the estimated fidelity (i.e., overlap between the two groups). The figure indicates that, as we increase the number of samples $N$, we achieve a better estimate of fidelity/overlap between the two groups.

\begin{figure}[!t]
    \centering
    \includegraphics[width=0.7\linewidth]{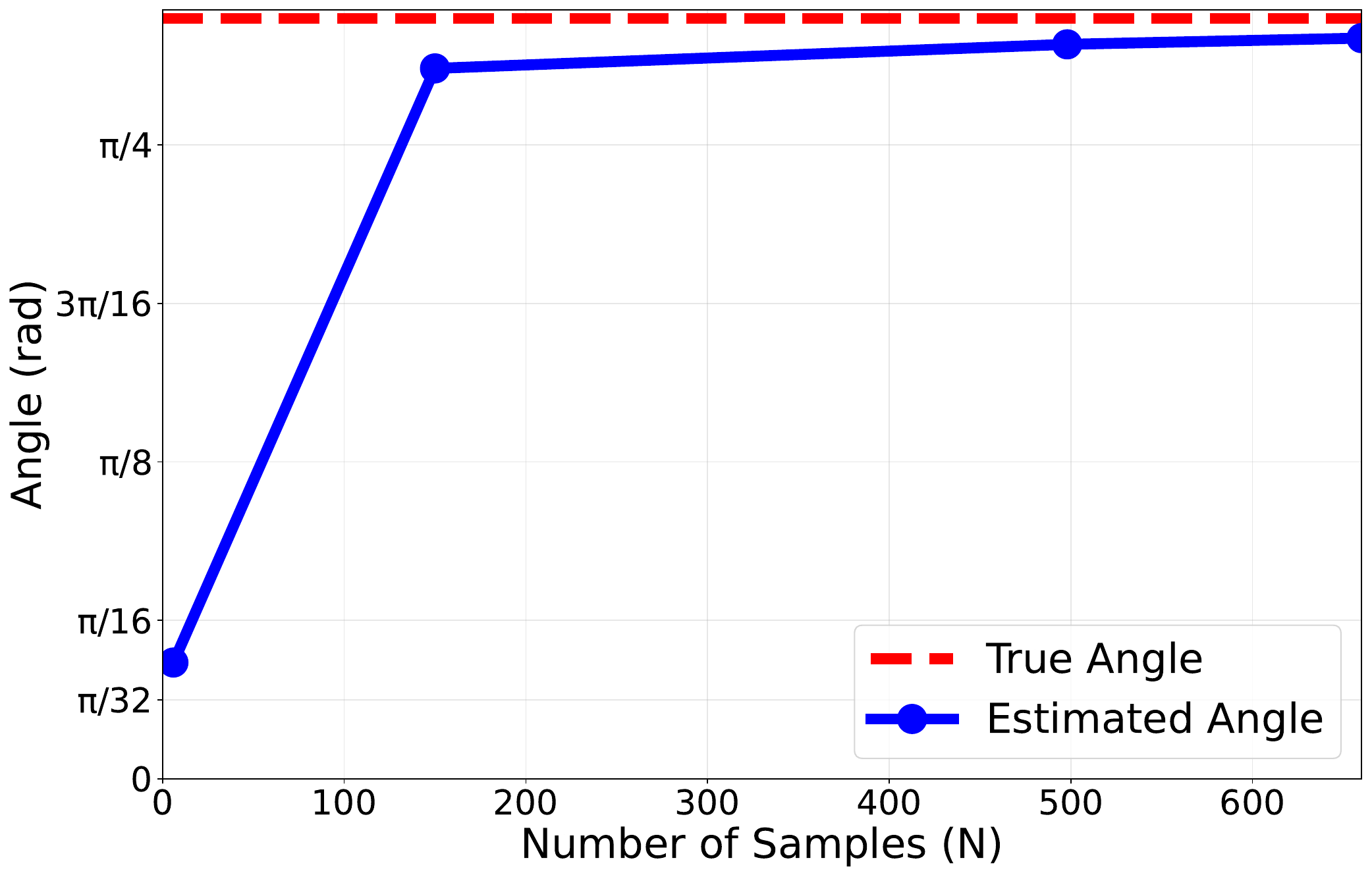}
    \caption{Effect of the number of samples $N$ on angle estimation.}
    \label{fig:samples}
\end{figure}

\begin{figure}[!t]
    \centering
    \includegraphics[width=0.7\linewidth]{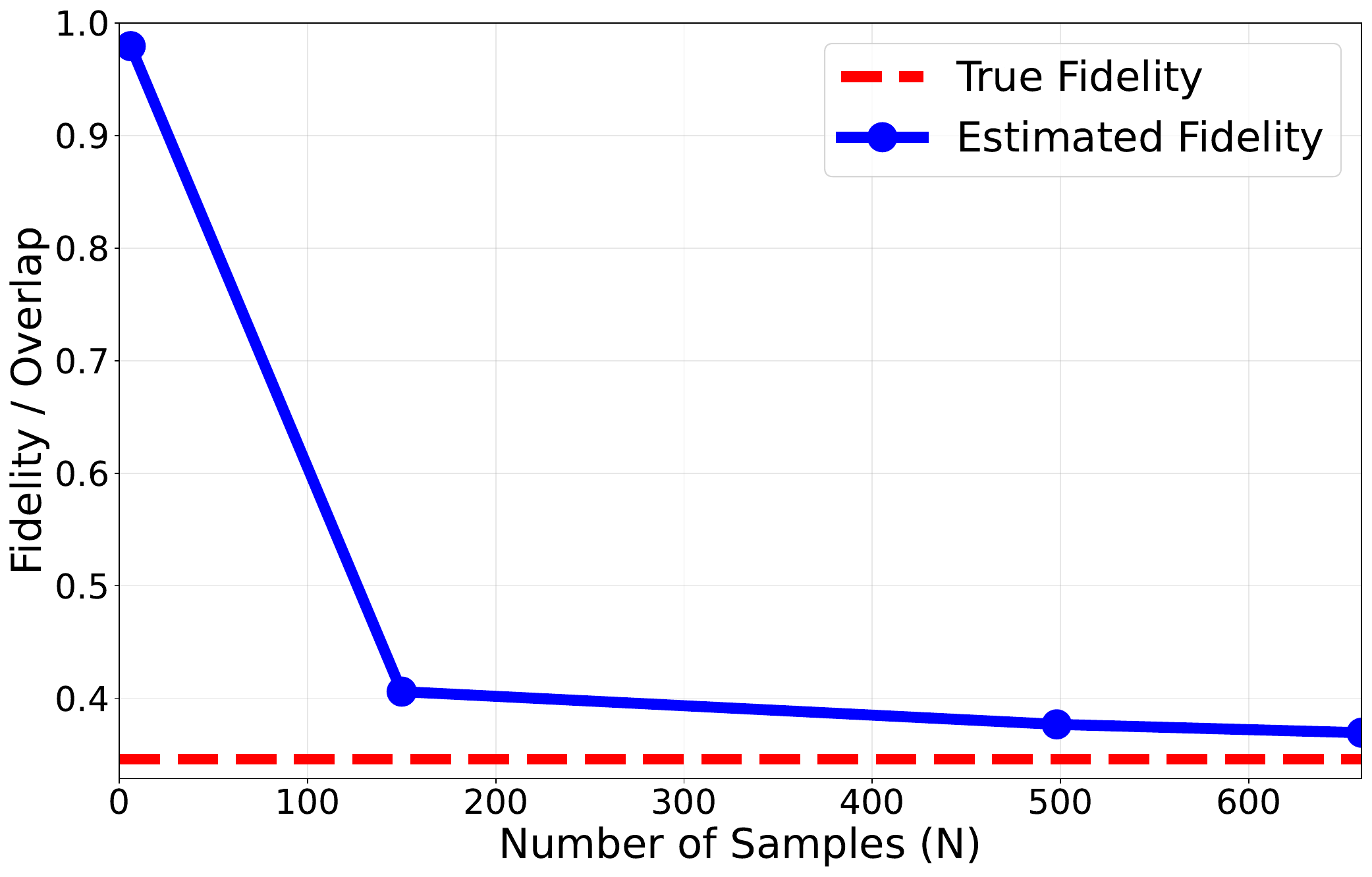}
    \caption{Effect of the number of samples $N$ on fidelity.}
    \label{fig:fidelity}
\end{figure}

\section{Discussion}
\label{discussion}
The previous results show that the proposed verification protocol can verify quantum metric learning algorithms in $O(N)$ measurements. This section discusses how to extend the proposed verification protocol to work if we have multiple groups and if we have higher-dimensional feature vectors.

\subsection{Extension to Multiple Groups}
For 2D feature vectors belonging to more than two groups, the verification procedure extends by evaluating all pairwise separations. For the $K$ groups, the verifier measures each group in the three mutually unbiased bases (computational, Hadamard, and circular) and reconstructs the corresponding density matrices $\{\rho_1, \rho_2, \ldots, \rho_K\}$. For each pair $(i,j)$, the fidelity is computed as
\[
F(\rho_i, \rho_j)
  = \left( \mathrm{Tr}\!\left[\sqrt{\sqrt{\rho_i}\,\rho_j\,\sqrt{\rho_i}} \right] \right)^2,
\]
and the angular separation is
\[
\theta_{ij} = \arccos\!\left(\sqrt{F(\rho_i,\rho_j)}\right).
\]
This yields $K(K-1)/2$ pairwise angles. Separation quality can be quantified using metrics such as the minimum pairwise angle $\min_{i,j} \theta_{ij}$, the average angle, or a threshold criterion that requires all $\theta_{ij} > \pi/2 - \gamma$. 

\subsection{Extension to Higher-Dimensional Feature Vectors}
For feature vectors with more than two components, the verification procedure extends naturally to multi-qubit quantum states.  
For an $n$-qubit system (with Hilbert space dimension $d = 2^{n}$), full quantum state reconstruction requires $d^{2}-1$~\cite{cramer2010efficient}.  
Whereas the two-dimensional case uses three mutually unbiased bases (computational, Hadamard, and circular), higher-dimensional systems require additional measurement settings. A standard approach is to perform the reconstruction using all Pauli bases. Once the states are reconstructed, the fidelity is computed following the same procedure as in the two-dimensional setting. Although the number of measurement settings required scales exponentially with $n$, for moderate system sizes, this extension enables reliable verification of quantum metric learning models in higher-dimensional feature spaces.



\section{Conclusion}
\label{conclusion}
In this paper, we present a practical blackbox verification protocol for verifying the correctness and the performance of quantum metric learning systems. We address the scenario where an untrusted prover claims to have a unitary embedding  circuit that maps classical input vectors from different groups/classes into quantum states with guaranteed orthogonal separation. The protocol enables a weak quantum verifier with a quantum measurement ability to efficiently validate whether the prover's embeddings indeed maintain the promised angular separation between the quantum representations of different sets. We implemented the proposed verification protocol and deployed it to evaluate quantum metric learning models, specifically PennyLane's QAOAEmbedding model. Our verification algorithm reconstructs density matrices from measurements and estimates separation angles using quantum fidelity. The results from both theoretical analysis as well as practical implementation reveal that the proposed protocol effectively evaluates quantum metric learning models and performs well in adversarial scenarios, demonstrating its utility as a verification tool for quantum embedding quality. This approach opens the door for verifying other quantum machine learning paradigms, including quantum clustering and classification algorithms.

\section{Acknowledgments}
This research was supported in part by a Seed Grant award from the Institute for Computational and Data Sciences at the Pennsylvania State University.

\bibliographystyle{IEEEtran}
\bibliography{references}

@article{lloyd2020quantum,
  title={Quantum embeddings for machine learning},
  author={Lloyd, Seth and Schuld, Maria and Ijaz, Aroosa and Izaac, Josh and Killoran, Nathan},
  journal={arXiv preprint arXiv:2001.03622},
  year={2020}
}

@article{cramer2010efficient,
  title={Efficient quantum state tomography},
  author={Cramer, Marcus and Plenio, Martin B and Flammia, Steven T and Somma, Rolando and Gross, David and Bartlett, Stephen D and Landon-Cardinal, Olivier and Poulin, David and Liu, Yi-Kai},
  journal={Nature communications},
  volume={1},
  number={1},
  pages={149},
  year={2010},
  publisher={Nature Publishing Group UK London}
}

@article{farhi2014quantum,
  title={A quantum approximate optimization algorithm},
  author={Farhi, Edward and Goldstone, Jeffrey and Gutmann, Sam},
  journal={arXiv preprint arXiv:1411.4028},
  year={2014}
}

@article{coles2019strong,
  title={Strong bound between trace distance and Hilbert-Schmidt distance for low-rank states},
  author={Coles, Patrick J and Cerezo, M and Cincio, Lukasz},
  journal={Physical Review A},
  volume={100},
  number={2},
  pages={022103},
  year={2019},
  publisher={APS}
}

@article{ruder2016overview,
  title={An overview of gradient descent optimization algorithms},
  author={Ruder, Sebastian},
  journal={arXiv preprint arXiv:1609.04747},
  year={2016}
}

@article{havlivcek2019supervised,
  title={Supervised learning with quantum-enhanced feature spaces},
  author={Havl{\'\i}{\v{c}}ek, Vojt{\v{e}}ch and C{\'o}rcoles, Antonio D and Temme, Kristan and Harrow, Aram W and Kandala, Abhinav and Chow, Jerry M and Gambetta, Jay M},
  journal={Nature},
  volume={567},
  number={7747},
  pages={209--212},
  year={2019},
  publisher={Nature Publishing Group UK London}
}

@article{benedetti2019parameterized,
  title={Parameterized quantum circuits as machine learning models},
  author={Benedetti, Marcello and Lloyd, Erika and Sack, Stefan and Fiorentini, Mattia},
  journal={Quantum science and technology},
  volume={4},
  number={4},
  pages={043001},
  year={2019},
  publisher={IOP Publishing}
}

@article{schuld2020circuit,
  title={Circuit-centric quantum classifiers},
  author={Schuld, Maria and Bocharov, Alex and Svore, Krysta M and Wiebe, Nathan},
  journal={Physical Review A},
  volume={101},
  number={3},
  pages={032308},
  year={2020},
  publisher={APS}
}

@article{farhi2018classification,
  title={Classification with quantum neural networks on near term processors},
  author={Farhi, Edward and Neven, Hartmut},
  journal={arXiv preprint arXiv:1802.06002},
  year={2018}
}

@article{lloyd2014quantum,
  title={Quantum principal component analysis},
  author={Lloyd, Seth and Mohseni, Masoud and Rebentrost, Patrick},
  journal={Nature physics},
  volume={10},
  number={9},
  pages={631--633},
  year={2014},
  publisher={Nature Publishing Group UK London}
}

@article{lloyd2016quantum,
  title={Quantum algorithms for topological and geometric analysis of data},
  author={Lloyd, Seth and Garnerone, Silvano and Zanardi, Paolo},
  journal={Nature communications},
  volume={7},
  number={1},
  pages={10138},
  year={2016},
  publisher={Nature Publishing Group UK London}
}

@article{rebentrost2014quantum,
  title={Quantum support vector machine for big data classification},
  author={Rebentrost, Patrick and Mohseni, Masoud and Lloyd, Seth},
  journal={Physical review letters},
  volume={113},
  number={13},
  pages={130503},
  year={2014},
  publisher={APS}
}

@book{wittek2014quantum,
  title={Quantum machine learning: what quantum computing means to data mining},
  author={Wittek, Peter},
  year={2014},
  publisher={Academic Press}
}

@article{biamonte2017quantum,
  title={Quantum machine learning},
  author={Biamonte, Jacob and Wittek, Peter and Pancotti, Nicola and Rebentrost, Patrick and Wiebe, Nathan and Lloyd, Seth},
  journal={Nature},
  volume={549},
  number={7671},
  pages={195--202},
  year={2017},
  publisher={Nature Publishing Group UK London}
}

@article{schuld2018supervised,
  title={Supervised learning with quantum computers},
  author={Schuld, Maria and Petruccione, Francesco},
  journal={Quantum science and technology},
  volume={17},
  year={2018},
  publisher={Springer}
}

@article{Havlicek2019,
  author = {Havlicek, Vojtěch and Córcoles, Antonio D. and Temme, Kristan and Harrow, Aram W. and Kandala, Abhinav and Chow, Jerry M. and Gambetta, Jay M.},
  title = {Supervised learning with quantum-enhanced feature spaces},
  journal = {Nature},
  volume = {567},
  pages = {209--212},
  year = {2019}
}

@article{Benedetti2019b,
  author = {Benedetti, Marcello and Garcia-Pintos, Delfina and Perdomo, Oscar and Leyton-Ortega, Vicente and Nam, Yunseong and Perdomo-Ortiz, Alejandro},
  title = {A generative modeling approach for benchmarking and training shallow quantum circuits},
  journal = {npj Quantum Information},
  volume = {5},
  pages = {45},
  year = {2019}
}

@book{Helstrom1976,
  author = {Helstrom, Carl W.},
  title = {Quantum detection and estimation theory},
  publisher = {Academic Press},
  year = {1976}
}

@article{Blank2019,
  author = {Blank, Carsten and Park, Daniel K. and Rhee, June-Koo Kevin and Petruccione, Francesco},
  title = {Quantum classifier with tailored quantum kernel},
  journal = {Quantum Machine Intelligence},
  volume = {1},
  pages = {1--7},
  year = {2019}
}

@article{Killoran2019,
  author = {Killoran, Nathan and Izaac, Josh and Quesada, Nicolás and Bergholm, Ville and Amy, Matthew and Weedbrook, Christian},
  title = {Strawberry Fields: A software platform for photonic quantum computing},
  journal = {Quantum},
  volume = {3},
  pages = {129},
  year = {2019}
}

@article{Harrow2017,
  author = {Harrow, Aram W. and Montanaro, Ashley},
  title = {Quantum computational supremacy},
  journal = {Nature},
  volume = {549},
  pages = {203--209},
  year = {2017}
}

@article{Schuld2019,
  author = {Schuld, Maria and Killoran, Nathan},
  title = {Quantum machine learning in feature Hilbert spaces},
  journal = {Physical Review Letters},
  volume = {122},
  number = {4},
  pages = {040504},
  year = {2019}
}

@article{Bergholm2018,
  author = {Bergholm, Ville and Izaac, Josh and Schuld, Maria and Gogolin, Christian and Killoran, Nathan},
  title = {PennyLane: Automatic differentiation of hybrid quantum-classical computations},
  journal = {arXiv preprint arXiv:1811.04968},
  year = {2018}
}

@article{helstrom1969quantum,
  title={Quantum detection and estimation theory},
  author={Helstrom, Carl W},
  journal={Journal of Statistical Physics},
  volume={1},
  pages={231--252},
  year={1969},
  publisher={Springer}
}

@article{bandyopadhyay2002new,
  title={A new proof for the existence of mutually unbiased bases},
  author={Bandyopadhyay and Boykin and Roychowdhury and Vatan},
  journal={Algorithmica},
  volume={34},
  pages={512--528},
  year={2002},
  publisher={Springer}
}

@book{paris2004quantum,
  title={Quantum state estimation},
  author={Paris, Matteo and Rehacek, Jaroslav},
  volume={649},
  year={2004},
  publisher={Springer Science \& Business Media}
}

@incollection{goldwasser1989knowledge,
  title={The knowledge complexity of interactive proof-systems},
  author={Goldwasser, Shafi and Micali, Silvio and Rackoff, Chales},
  booktitle={Providing sound foundations for cryptography: On the work of shafi goldwasser and silvio micali},
  pages={203--225},
  year={2019}
}

@article{bergholm2018pennylane,
  title={Pennylane: Automatic differentiation of hybrid quantum-classical computations},
  author={Bergholm, Ville and Izaac, Josh and Schuld, Maria and Gogolin, Christian and Ahmed, Shahnawaz and Ajith, Vishnu and Alam, M Sohaib and Alonso-Linaje, Guillermo and AkashNarayanan, B and Asadi, Ali and others},
  journal={arXiv preprint arXiv:1811.04968},
  year={2018}
}

\end{document}